\providecommand{\U}[1]{\protect\rule{.1in}{.1in}}
\newtheorem{example}[theorem]{Example}
\newtheorem{remark}[theorem]{Remark}
\begin{document}

\let\originalleft\left
\let\originalright\right
\renewcommand{\left}{\mathopen{}\mathclose\bgroup\originalleft}
\renewcommand{\right}{\aftergroup\egroup\originalright}

\title[Sequential decoding of a general classical-quantum channel]{Sequential decoding of a general\\classical-quantum channel}
\author{Mark M. Wilde}
\affiliation{School of Computer Science, McGill University, Montreal, Quebec H3A 2A7, Canada}
\label{firstpage}
\maketitle

\begin{abstract}
{sequential decoding, hypothesis testing relative entropy, non-commutative union bound, Naimark extension theorem, quantum polar codes}
Since a quantum measurement generally disturbs the state of a
quantum system, one might think that it should not be possible for a sender and
receiver to communicate reliably when the receiver performs a
large number of sequential measurements to determine the message of the
sender. We show here that this intuition is not true, by demonstrating that
a sequential decoding strategy works well even in the most general \textquotedblleft
one-shot\textquotedblright\ regime, where we are given a single instance of a
channel and wish to determine the maximal number of bits that can be
communicated up to a small failure probability. This result follows
by generalizing a
non-commutative union bound to apply for a sequence
of general measurements.
We also demonstrate two
ways in which a receiver can recover a state close to the original state after
it has been decoded by a sequence of measurements that each succeed with high
probability. The second of these methods will be useful in realizing an
efficient decoder for fully quantum polar codes, should a method ever be found
to realize an efficient decoder for classical-quantum polar codes.

\end{abstract}

\section{Introduction}

\label{sec:intro}The reliable communication of classical data over quantum
channels is one of the earliest problems to be considered in quantum
information theory. Some of the most important contributions to this problem
(to name just a few) are the Holevo upper bound on the accessible information
\citep{Holevo73}, the coding theorem due to \citep{Hol98,PhysRevA.56.131} (HSW), and the fact that entangled
signaling states can enhance communication rates for certain quantum channels
\citep{H09}.

The main difference between the proofs of the HSW\ theorem and Shannon's
classical channel capacity theorem \citep{bell1948shannon}\ is that, in the
former case, one has to specify a quantum measurement that recovers the
classical data being transmitted (a quantum decoder), as opposed to a
classical algorithm that does so. Indeed, in their respective proofs,
HSW\ demonstrated that a quantum measurement known as the \textquotedblleft
pretty-good\textquotedblright\ or \textquotedblleft
square-root\textquotedblright\ measurement \citep{B75,B75a,PhysRevA.54.1869}\ allows a
receiver to decode classical information reliably at a rate equal to the
Holevo rate. For a pure-loss bosonic channel modeling free-space
communication, for example, this Holevo rate can be significantly higher than
data rates that are achievable with more traditional measurement strategies
such as homodyne or heterodyne detection~\citep{GGLMSY04}.

As in the HSW\ decoding measurement, we typically perform measurements on
quantum systems in order to gain information about them, and one well-known
feature of quantum mechanics is that a measurement can disturb the state of
the system that we are measuring. Thus, it came as a surprise when Lloyd,
Giovannetti, and Maccone (LGM)
\citep{PhysRevLett.106.250501,PhysRevA.85.012302}\ showed that it is possible
to achieve the Holevo rate by performing independent sequential measurements,
in analogy with classical sequential decoding strategies \citep{book1991cover}.
A sequential decoding scheme proceeds according to the following simple algorithm:

\begin{enumerate}
\item Let $M$ be the total number of codewords. Initialize a counter $i=1$.

\item Perform a quantum measurement to determine if the transmitted codeword
is the $i^{\text{th}}$ codeword.

\item If the measurement result is \textquotedblleft yes,\textquotedblright%
\ decode as codeword $i$ and conclude. If the measurement result is
\textquotedblleft no,\textquotedblright\ increment $i.$

\item If $i\leq M$, go to step~2. Otherwise, declare failure.
\end{enumerate}

After the work of LGM, Sen presented a remarkable simplification of their
error analysis \citep{S11}, by establishing a non-commutative union bound that
holds for a set of projective measurements applied sequentially to a quantum
state (where the projective measurements do not necessarily commute). This
non-commutative union bound is an extension of the familiar union bound from
probability theory, and as such, it should find wide application in settings
beyond those considered in quantum communication theory. Sen applied his
non-commutative union bound to a variety of problems in \citep{S11}, including
the problem of classical communication over quantum channels, and it has since
been applied in designing Holevo-rate-achieving polar codes for
classical-quantum channels \citep{WG11}\ and in demonstrating how to decode the
pure-loss bosonic channel at the Holevo rate~\citep{WGTS11}.

All of the above results apply to a setting in which the channel is memoryless
and identically distributed, so that one use of it does not depend on the
others and so that each use leads to the same noise at the output as the other
uses, respectively. Given that this \textquotedblleft IID\textquotedblright%
\ setting is really just an idealization, there has been a strong effort to
develop a theory of quantum information that goes beyond the IID\ setting and
applies to channels with no structure whatsoever \citep{Renner2005,DH11,T12}.
This regime beyond the IID\ setting is known as the \textquotedblleft
one-shot\textquotedblright\ regime, where we are concerned with a single
instance of a resource and desire to make the best use of it up to some
controllable failure probability. In this vein, there have been several
contributions characterizing the reliable communication of classical data over
quantum channels \citep{HN03,mosonyi:072104,WR12}, and all of these employed
the \textquotedblleft pretty-good\textquotedblright\ measurement as the decoder.

Many of the developments listed above have improved our understanding of
classical communication over quantum channels, but there are some important
considerations left unanswered:

\begin{enumerate}
\item We know very well that the most general kind of measurement allowed in
quantum mechanics is a positive operator-valued measure (POVM). Does Sen's
bound generalize so that it applies for a sequence of general measurements?

\item Does sequential decoding work well in the one-shot regime?

\item When can one conclude that the state resulting from a sequence of
general measurements is close to the state before this sequence of
measurements occurs?
\end{enumerate}

This paper resolves the above problems, by showing that

\begin{enumerate}
\item Sen's non-commutative union bound applies not just for a sequence of
projections, but for the more general case of a sequence of positive operators
each with spectrum less than one. This result follows simply by applying the
well-known Naimark extension theorem.\footnote{This observation is due to
Andreas Winter and Aram Harrow from a discussion in December 2011 at
QIP~2012.} Thus, the non-commutative union bound now applies for a sequence of
general measurements (POVMs) and, as such, it should find wide application
in other areas of quantum information science.

\item Indeed, sequential decoding works well even in the one-shot regime. That
is, one can give a meaningful bound on the amount of information that can be
transmitted up to a failure probability no larger than $\varepsilon$ for some
$\varepsilon>0$ when using a sequential decoding strategy. The information
bound we present is very similar to the bound of \citep{WR12}.

\item A sequence of measurements followed by the reverse sequence of these
measurements causes only a negligible disturbance to a state if the original
sequence of measurements has a high probability of success. This last result
generalizes Winter's gentle measurement lemma \citep{itit1999winter} to the
more general setting of a sequence of measurements. One application of this
last result is in decoding fully quantum polar codes for arbitrary quantum
channels \citep{WR12a}.\footnote{Quantum polar codes are the only known
near-explicit quantum error-correcting codes that achieve the coherent
information rate of an arbitrary quantum channel.}
\end{enumerate}

We structure this paper as follows. The next section reviews some background
material, including the definition of the hypothesis testing relative entropy
\citep{BD10,WR12}, the Naimark extension theorem, and Sen's non-commutative union
bound \citep{S11}. We then proceed in the order given above.

\section{Review}

\subsection{Hypothesis testing relative entropy}

The hypothesis testing relative entropy, denoted as $D^{\varepsilon}_H\left(
\rho||\sigma\right)  $, is an entropy measure derived from the error
probabilities arising from a quantum measurement that attempts to distinguish
between the states $\rho$ and $\sigma$ (a quantum hypothesis test). The most
general measurement that one could use in such a test is a two-outcome
POVM\ $\left\{  Q,I-Q\right\}  $ where $0\leq Q\leq I$. The outcome $Q$
corresponds to deciding that the state is $\rho$ and the outcome $I-Q$
corresponds to deciding that the state is $\sigma$. Thus, the probability of
guessing correctly when the state is $\rho$ is equal to Tr$\left\{
Q\rho\right\}  $, and the probability of guessing incorrectly when the state
is $\sigma$ is equal to Tr$\left\{  Q\sigma\right\}  $. In an asymmetric
quantum hypothesis test, we try to find a POVM\ that guesses $\rho$ correctly
with high probability, so that%
\begin{equation}
\text{Tr}\left\{  Q\rho\right\}  \geq1-\varepsilon,
\end{equation}
for some small, fixed $\varepsilon\geq0$, while minimizing the probability
that we guess $\sigma$ incorrectly. This naturally leads to a semidefinite
optimization program, specified by the following quantity:%
\begin{equation}
\beta_{\varepsilon}\left(  \rho,\sigma\right)  \equiv\min_{Q}\left\{
\text{Tr}\left\{  Q\sigma\right\}  :0\leq Q\leq I,\ \text{Tr}\left\{
Q\rho\right\}  \geq1-\varepsilon\right\}  . \label{eq:beta-error}%
\end{equation}
By taking the negative logarithm of $\beta_{\varepsilon}\left(  \rho
,\sigma\right)  $, we arrive at the hypothesis testing relative entropy
defined in \citep{BD10,WR12}:%
\begin{equation}
D_{H}^{\varepsilon}\left(  \rho||\sigma\right)  \equiv-\log\beta_{\varepsilon
}\left(  \rho,\sigma\right)  . \label{eq:hypo-entropy}%
\end{equation}
One can derive other entropic measures based on the hypothesis testing
relative entropy that have various natural properties \citep{DKFRR12}.

\subsection{Naimark extension theorem}

We briefly review the Naimark extension theorem\ and a straightforward proof
of it. The importance of this theorem is that it demonstrates how one can
implement a general quantum measurement simply by performing a unitary on the
system of interest and a probe system, followed by a von Neumann measurement
of the probe.

\begin{theorem}
[Naimark]\label{thm:naimark}For any POVM $\left\{  \Gamma_{x}\right\}
_{x\in\mathcal{X}}$ acting on a system $S$, there exists a unitary $U_{SP}%
$\ (acting on the system $S$ and a probe system $P$) and an orthonormal basis
$\left\{  \left\vert x\right\rangle _{P}\right\}  _{x\in\mathcal{X}}$ such
that%
\begin{equation}
\text{\emph{Tr}}\left\{  U_{SP}^{\dag}\left(  I_{S}\otimes\left\vert
x\right\rangle \left\langle x\right\vert _{P}\right)  U_{SP}\left(  \rho
_{S}\otimes\left\vert 0\right\rangle \left\langle 0\right\vert _{P}\right)
\right\}  =\text{\emph{Tr}}\left\{  \Gamma_{x}\rho\right\}  .
\end{equation}

\end{theorem}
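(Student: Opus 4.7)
The plan is to construct the desired unitary $U_{SP}$ by first building an isometry $V$ from the system space $\mathcal{H}_S$ into $\mathcal{H}_S\otimes\mathcal{H}_P$ using the positive square roots of the POVM elements, and then extending $V$ to a unitary on the full bipartite space. Concretely, let $\mathcal{H}_P$ have dimension $|\mathcal{X}|$, fix an orthonormal basis $\{|x\rangle_P\}_{x\in\mathcal{X}}$ together with a distinguished ancilla vector $|0\rangle_P$ (say the first basis element), and define
$$V|\psi\rangle_S \,\equiv\, \sum_{x\in\mathcal{X}} \sqrt{\Gamma_x}\,|\psi\rangle_S \otimes |x\rangle_P.$$
The identity $V^{\dag}V = \sum_x \Gamma_x = I_S$ follows at once from the orthonormality of the probe basis together with the POVM completeness relation, so $V$ is an isometry.

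Next I would extend $V$ to a unitary $U_{SP}$ on $\mathcal{H}_S\otimes\mathcal{H}_P$ by declaring $U_{SP}(|\psi\rangle_S\otimes|0\rangle_P) = V|\psi\rangle_S$ for every $|\psi\rangle_S$ and then defining its action on an orthonormal complement of $\mathcal{H}_S\otimes|0\rangle_P$ in any way that maps that subspace isometrically onto the orthogonal complement of the range of $V$. Such an extension always exists because the two complementary subspaces are finite-dimensional of equal dimension, and any isometry between them serves.

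Finally, I would verify the stated identity. By cyclicity of the trace and the defining property of $U_{SP}$ on $\mathcal{H}_S\otimes|0\rangle_P$, the left-hand side equals $\text{Tr}\{(I_S\otimes|x\rangle\langle x|_P)\,V\rho_S V^{\dag}\}$. Substituting the explicit formula for $V$ gives
$$V\rho_S V^{\dag} \,=\, \sum_{x',y'\in\mathcal{X}} \sqrt{\Gamma_{x'}}\,\rho_S\,\sqrt{\Gamma_{y'}} \otimes |x'\rangle\langle y'|_P,$$
and inserting the probe projector and tracing collapses the double sum to $\text{Tr}\{\sqrt{\Gamma_x}\,\rho_S\,\sqrt{\Gamma_x}\}$, which equals $\text{Tr}\{\Gamma_x\rho_S\}$ by cyclicity once more. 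There is no genuine obstacle here; the argument is entirely mechanical. The only subtlety worth flagging is that completeness $\sum_x\Gamma_x = I_S$ is precisely what makes $V$ an isometry, and it is this global relation — rather than any individual property of the separate POVM elements — that permits the unitary dilation.
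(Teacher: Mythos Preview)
Your proof is correct and follows essentially the same approach as the paper: both construct the isometry $V=\sum_x \sqrt{\Gamma_x}\otimes|x\rangle\langle 0|_P$ and then extend it to a unitary on $\mathcal{H}_S\otimes\mathcal{H}_P$. You have simply supplied more detail---explicitly checking that $V^{\dag}V=I_S$ via the completeness relation and carrying out the final trace computation---where the paper merely asserts that the statement ``follows easily from this choice of unitary.''
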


\begin{proof}
For every POVM $\left\{  \Gamma_{x}\right\}  $, we can form the following
isometry:%
\begin{equation}
V_{SP}\equiv\sum_{x}\left(  \sqrt{\Gamma_{x}}\right)  _{S}\otimes\left\vert
x\right\rangle \left\langle 0\right\vert _{P},
\end{equation}
which can be extended to a unitary operator $U_{SP}$ by appropriately filling
out the other $\left\vert \mathcal{X}\right\vert -1$ entries of the form:%
\begin{equation}
\sum_{x}\left(  A_{x,x^{\prime}}\right)  _{S}\otimes\left\vert x\right\rangle
\langle x^{\prime}|,
\end{equation}
for some operators $A_{x,x^{\prime}}$ and where $x^{\prime}\in\left\{
1,\ldots,\left\vert \mathcal{X}\right\vert -1\right\}  $. The statement of the
theorem then follows easily from this choice of unitary.
\end{proof}

\begin{example}
\label{ex:binary-POVM}Let $\left\{  \Gamma,I-\Gamma\right\}  $ be a binary
POVM acting on the system$~S$. Consider the following unitary operator
$U_{SP}$ acting on the system$~S$\ and a qubit probe system~$P$:%
\begin{equation}
U_{SP}\equiv\left(  \sqrt{\Gamma}\right)  _{S}\otimes\left\vert 0\right\rangle
\left\langle 0\right\vert _{P}+\left(  \sqrt{I-\Gamma}\right)  _{S}%
\otimes\left\vert 1\right\rangle \left\langle 0\right\vert _{P}-\left(
\sqrt{I-\Gamma}\right)  _{S}\otimes\left\vert 0\right\rangle \left\langle
1\right\vert _{P}+\left(  \sqrt{\Gamma}\right)  _{S}\otimes\left\vert
1\right\rangle \left\langle 1\right\vert _{P}.
\end{equation}
The above unitary corresponds to a Naimark extension of the POVM $\left\{
\Gamma,I-\Gamma\right\}  $.
\end{example}

\subsection{Non-commutative union bound}

This section recalls Sen's non-commutative union bound \citep{S11}. As we
mentioned in Section~\ref{sec:intro}, this bound should find wide application
in settings beyond those considered for communication, since it generalizes
the union bound from probability theory.

\begin{theorem}
[Sen]\label{thm:sen}For a subnormalized state $\sigma$ such that $\sigma\geq0$
and \emph{Tr}$\left\{  \sigma\right\}  \leq1$, and a sequence of Hermitian
projectors $\Pi_{1}$, \ldots, $\Pi_{M}$, the following non-commutative union
bound holds:%
\begin{equation}
\text{\emph{Tr}}\left\{  \sigma\right\}  -\text{\emph{Tr}}\left\{  \Pi
_{M}\cdots\Pi_{1}\sigma\Pi_{1}\cdots\Pi_{M}\right\}  \leq2\sqrt{\sum_{m=1}%
^{M}\text{\emph{Tr}}\left\{  \left(  I-\Pi_{m}\right)  \sigma\right\}  }.
\end{equation}

\end{theorem}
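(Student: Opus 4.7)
The plan is to purify $\sigma$ and lift the trace inequality to a vector-norm inequality on the purified space, where each $\Pi_m$ can be exploited as a genuine orthogonal projector. First I would introduce a purification $|\psi\rangle_{SR}$ of $\sigma$ (which is subnormalized since $\langle\psi|\psi\rangle=\text{Tr}\{\sigma\}\leq 1$) and set $|v_0\rangle:=|\psi\rangle$ and $|v_m\rangle:=(\Pi_m\cdots\Pi_1)_S\otimes I_R\,|\psi\rangle$. With this notation, $\text{Tr}\{\Pi_M\cdots\Pi_1\sigma\Pi_1\cdots\Pi_M\}=\|v_M\|^2$ and $\text{Tr}\{(I-\Pi_m)\sigma\}=\|(I-\Pi_m)v_0\|^2$, so the claim reduces to showing $\|v_0\|^2-\|v_M\|^2\leq 2\sqrt{\sum_m\|(I-\Pi_m)v_0\|^2}$.

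Next I would write $\|v_0\|^2-\|v_M\|^2=(\|v_0\|+\|v_M\|)(\|v_0\|-\|v_M\|)\leq 2\|v_0\|\cdot\|v_0-v_M\|$, using $\|v_M\|\leq\|v_0\|$ (since each $\Pi_m$ is a contraction) and the reverse triangle inequality. Because $\|v_0\|\leq 1$, this reduces the theorem to the purely vector-space inequality
\[
\|v_0-v_M\|^2\leq\sum_{m=1}^M\|(I-\Pi_m)v_0\|^2.
\]

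The core step is to prove this by induction on $M$ via the identity
\[
v_0-v_M=(I-\Pi_M)v_0+\Pi_M(v_0-v_{M-1}),
\]
which is immediate from $v_M=\Pi_M v_{M-1}$. Since $\Pi_M$ is a projector, $(I-\Pi_M)\Pi_M=0$, so the two summands are orthogonal and
\[
\|v_0-v_M\|^2=\|(I-\Pi_M)v_0\|^2+\|\Pi_M(v_0-v_{M-1})\|^2\leq\|(I-\Pi_M)v_0\|^2+\|v_0-v_{M-1}\|^2,
\]
the last step using $\|\Pi_M\|\leq 1$. Iterating down to $v_0-v_0=0$ yields the claimed bound, and combining with the earlier reduction proves the theorem.

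The main obstacle I anticipate is spotting the right regrouping in the inductive step: one must peel $(I-\Pi_M)$ off $v_0$ rather than off $v_{M-1}$, since this is what makes the two summands lie in orthogonal subspaces and triggers the Pythagorean recursion. A naive bound $\|v_0-v_M\|\leq\sum_m\|v_{m-1}-v_m\|$ would instead produce $\sum_m\sqrt{\text{Tr}\{(I-\Pi_m)\tau_{m-1}\}}$, with errors measured against the disturbed states $\tau_{m-1}:=\Pi_{m-1}\cdots\Pi_1\sigma\Pi_1\cdots\Pi_{m-1}$ rather than $\sigma$ itself, and would not recover the square-root-of-a-sum structure.
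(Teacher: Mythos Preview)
Your proof is correct. However, the paper does not actually prove this theorem: it is stated in the Review section as a known result due to Sen \citep{S11}, with no proof given. What you have written is essentially Sen's original argument---purify $\sigma$, reduce to the vector inequality $\|v_0-v_M\|^2\leq\sum_m\|(I-\Pi_m)v_0\|^2$, and establish the latter by the Pythagorean recursion based on the orthogonal decomposition $v_0-v_M=(I-\Pi_M)v_0+\Pi_M(v_0-v_{M-1})$. So there is nothing to compare against within this paper itself, but your argument matches the proof in the cited reference.
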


\section{Non-commutative union bound for POVMs}

We now give an extension of Sen's non-commutative union bound that applies for
general measurements.

\begin{lemma}
\label{lem:non-comm-bound}Let $\sigma$ be a subnormalized state such that
$\sigma\geq0$ and \emph{Tr}$\left\{  \sigma\right\}  \leq1$, and let
$\Lambda_{1}$, \ldots, $\Lambda_{M}$ denote a set of positive operators such
that $0\leq\Lambda_{m}\leq I$ for all $m\in\left\{  1,\ldots,M\right\}  $.
Then the following non-commutative union bound holds:%
\begin{equation}
\text{\emph{Tr}}\left\{  \sigma\right\}  -\text{\emph{Tr}}\left\{
\Pi_{\Lambda_{M}}\cdots\Pi_{\Lambda_{1}}\left(  \sigma\otimes\left\vert
\overline{0}\right\rangle \left\langle \overline{0}\right\vert _{P^{M}%
}\right)  \Pi_{\Lambda_{1}}\cdots\Pi_{\Lambda_{M}}\right\}  \leq2\sqrt
{\sum_{m=1}^{M}\text{\emph{Tr}}\left\{  \left(  I-\Lambda_{m}\right)
\sigma\right\}  }, \label{eq:ext-sens-bound}%
\end{equation}
where $\left\vert \overline{0}\right\rangle _{P^{M}}\equiv\left\vert
0\right\rangle _{P_{1}}\otimes\cdots\otimes\left\vert 0\right\rangle _{P_{M}}$
is an ancillary state of $M$ probe systems and $\Pi_{\Lambda_{i}}$ is a
projector defined as $\Pi_{\Lambda_{i}}\equiv U_{i}^{\dag}P_{i}U_{i}$, for
some unitary $U_{i}$ and projector $P_{i}$ such that%
\begin{equation}
\text{\emph{Tr}}\left\{  \Pi_{\Lambda_{i}}\left(  \sigma\otimes\left\vert
\overline{0}\right\rangle \left\langle \overline{0}\right\vert _{P^{M}%
}\right)  \right\}  =\text{\emph{Tr}}\left\{  \Lambda_{m}\sigma\right\}  .
\end{equation}

\end{lemma}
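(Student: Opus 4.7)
The plan is to reduce the statement to Sen's projective non-commutative union bound (Theorem~\ref{thm:sen}) by using the Naimark extension theorem (Theorem~\ref{thm:naimark}) to dilate each binary POVM $\{\Lambda_m, I - \Lambda_m\}$ to a projective measurement on an enlarged Hilbert space. Once all $M$ dilations are carried out on independent probe systems, the resulting operators are honest projectors on the composite space, and Sen's bound applies verbatim; it only remains to check that both sides translate faithfully back to the POVM quantities.

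Concretely, I would proceed as follows. For each $m \in \{1,\ldots,M\}$, invoke the construction from Example~\ref{ex:binary-POVM} to obtain a unitary $U_m$ on $S$ together with a fresh qubit probe $P_m$ such that
\begin{equation}
\text{Tr}\left\{ \left( I_S \otimes |0\rangle\langle 0|_{P_m}\right) U_m \left(\sigma \otimes |0\rangle\langle 0|_{P_m}\right) U_m^{\dag} \right\} = \text{Tr}\{ \Lambda_m \sigma \}.
\end{equation}
Set $\Pi_{\Lambda_m} \equiv U_m^{\dag}\left( I_S \otimes |0\rangle\langle 0|_{P_m}\right) U_m$ and extend it to the full composite system $S P_1 \cdots P_M$ by tensoring with the identity on every other probe. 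Each $\Pi_{\Lambda_m}$ is then a Hermitian projector on $SP^M$, and by cyclicity of the trace together with the fact that $U_m$ acts as the identity on all probes besides $P_m$,
\begin{equation}
\text{Tr}\left\{ \Pi_{\Lambda_m} \left(\sigma \otimes |\overline{0}\rangle\langle \overline{0}|_{P^M}\right) \right\} = \text{Tr}\{\Lambda_m \sigma\},
\end{equation}
so that $\text{Tr}\{(I - \Pi_{\Lambda_m})(\sigma \otimes |\overline{0}\rangle\langle \overline{0}|_{P^M})\} = \text{Tr}\{(I-\Lambda_m)\sigma\}$.

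Applying Sen's bound to the subnormalized state $\sigma \otimes |\overline{0}\rangle\langle \overline{0}|_{P^M}$ (whose trace equals $\text{Tr}\{\sigma\}$) together with the projector sequence $\Pi_{\Lambda_1}, \ldots, \Pi_{\Lambda_M}$ yields exactly the left-hand side of (\ref{eq:ext-sens-bound}), while the right-hand side collapses to $2\sqrt{\sum_{m=1}^{M} \text{Tr}\{(I-\Lambda_m)\sigma\}}$ by the identity just displayed. This completes the reduction.

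The main conceptual step, and essentially the only one requiring care, is ensuring that each Naimark dilation lives on its own probe so that the $M$ resulting $\Pi_{\Lambda_m}$ can be viewed simultaneously as projectors on one common space. I do not expect any real obstacle here: using independent probes never alters the "success" probabilities $\text{Tr}\{\Lambda_m\sigma\}$ that drive the right-hand side, and the non-commutativity among the $U_m^{\dag}\cdot U_m$ factors acting on $S$ is precisely the phenomenon Sen's bound is designed to control.
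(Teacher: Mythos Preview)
Your proposal is correct and follows essentially the same route as the paper's own proof: Naimark-dilate each binary POVM $\{\Lambda_m,I-\Lambda_m\}$ onto its own probe $P_m$, observe that the resulting $\Pi_{\Lambda_m}$ are genuine projectors on the common space $SP^M$, and then apply Sen's projective bound (Theorem~\ref{thm:sen}) to $\sigma\otimes|\overline{0}\rangle\langle\overline{0}|_{P^M}$. Your write-up is in fact more explicit than the paper's about the key bookkeeping point---using independent probes so that all $M$ dilated projectors live on a single Hilbert space---which is exactly the ingredient that makes the reduction work.
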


\begin{proof}
This extension of Sen's bound follows easily by employing the Naimark
extension theorem and Sen's non-commutative union bound.

To each POVM\ element $\Lambda_{i}$ (as in the statement of the theorem),
there exists a unitary $U_{SP_{i}}^{\left(  i\right)  }$ (acting on the system
$S$ and the $i^{\text{th}}$ probe system) and a projector $I_{S}%
\otimes\left\vert i\right\rangle \left\langle i\right\vert _{P_{i}}$ such that
the following relation holds%
\begin{equation}
\text{Tr}\left\{  \Pi_{\Lambda_{i}}\left(  \rho_{S}\otimes\left\vert
\overline{0}\right\rangle \left\langle \overline{0}\right\vert _{P^{M}%
}\right)  \right\}  =\text{Tr}\left\{  \Lambda_{i}\rho_{S}\right\}  ,
\end{equation}
where%
\begin{equation}
\Pi_{\Lambda_{i}}\equiv\left(  U_{SP_{i}}^{\left(  i\right)  }\right)  ^{\dag
}\left(  I_{S}\otimes\left\vert i\right\rangle \left\langle i\right\vert
_{P_{i}}\right)  U_{SP_{i}}^{\left(  i\right)  }.
\end{equation}
Observe that the operator $\Pi_{\Lambda_{i}}$ is a Hermitian projector,
so that Sen's bound applies to each
of these operators. Then (\ref{eq:ext-sens-bound}) follows from
Theorems~\ref{thm:sen}\ and \ref{thm:naimark}.
\end{proof}

\begin{remark}
Since Lemma~\ref{lem:non-comm-bound}\ applies for general measurements, it can
be used in the context of Sections~3 and 4 of \citep{S11}\ without the
need for constructing a particular kind of \textquotedblleft intersection
projector\textquotedblright\ as is done there.
\end{remark}

\section{Sequential decoding in the one-shot regime}

This section provides a proof for one of our main results:\ that a sequential
decoding strategy works well even in the one-shot regime. More specifically,
the theorem bounds the $\varepsilon$-one-shot classical capacity of a
classical-quantum channel, defined operationally as the maximum number of bits
that a sender can transmit to a receiver using such a channel with a failure
probability no larger than~$\varepsilon$. The general idea behind the proof is
the same as that in the proof of Theorem~1 of \citep{WR12},
with the exception that we employ a sequential decoding strategy and use
Lemma~\ref{lem:non-comm-bound}\ to bound the error probability of this
decoding strategy.

\begin{theorem}
\label{thm:seq-one-shot}A sequential decoding strategy leads to the following
bound on the $\varepsilon$-one-shot classical capacity $C^{\varepsilon}\left(
W\right)  $\ of a classical-quantum channel $W:x\rightarrow\rho_{x}$:%
\begin{equation}
C^{\varepsilon}\left(  W\right)  \geq\max_{p_{X}}D_{H}^{\varepsilon^{\prime}%
}\left(  \rho_{XB}||\rho_{X}\otimes\rho_{B}\right)  -\log_{2}\left(  \frac
{1}{\varepsilon^{2}/4-\varepsilon^{\prime}}\right)  ,
\end{equation}
for some $\varepsilon^{\prime}$ such that $\varepsilon^{2}/4>\varepsilon
^{\prime}$, where $\rho_{XB}$ is the following classical-quantum state that
depends on the distribution $p_{X}\left(  x\right)  $ and the channel $W$:%
\begin{equation}
\rho_{XB}\equiv\sum_{x}p_{X}\left(  x\right)  \left\vert x\right\rangle
\left\langle x\right\vert _{X}\otimes\left(  \rho_{x}\right)  _{B}.
\end{equation}

\end{theorem}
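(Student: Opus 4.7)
The plan is to follow the standard random-coding argument, using the hypothesis-testing POVM element to define the binary tests performed by a sequential decoder and invoking Lemma~\ref{lem:non-comm-bound} to control the accumulated disturbance. Fix an input distribution $p_X$, and let $Q_{XB}$ attain the minimum in $\beta_{\varepsilon^{\prime}}(\rho_{XB},\rho_X\otimes\rho_B)$. Because $\rho_{XB}$ is classical on $X$, dephasing $Q_{XB}$ in the $X$ basis preserves both the constraint $\operatorname{Tr}\{Q\rho_{XB}\}\ge 1-\varepsilon^{\prime}$ and the objective $\operatorname{Tr}\{Q(\rho_X\otimes\rho_B)\}$, so without loss of generality $Q_{XB}=\sum_x |x\rangle\langle x|_X\otimes Q_x$ for positive operators $0\le Q_x\le I$ on $B$. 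Generate a random codebook $X_1,\ldots,X_M$ drawn i.i.d.\ from $p_X$, and let the decoder apply the binary POVMs $\{Q_{X_m},I-Q_{X_m}\}$ sequentially for $m=1,\ldots,M$, outputting the first index that returns ``yes.''

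Conditioning on transmission of codeword $m$ (so the output state is $\rho_{X_m}$), correct decoding amounts to the first $m-1$ tests returning ``no'' and the $m$-th returning ``yes''. Applying Lemma~\ref{lem:non-comm-bound} to the sequence $I-Q_{X_1},\ldots,I-Q_{X_{m-1}},Q_{X_m}$ on the state $\rho_{X_m}$ bounds the conditional error probability by
\begin{equation*}
2\sqrt{\sum_{j=1}^{m-1}\operatorname{Tr}\{Q_{X_j}\rho_{X_m}\}+\operatorname{Tr}\{(I-Q_{X_m})\rho_{X_m}\}}.
\end{equation*}
Averaging over the random codebook and over the uniform choice of message, then pulling the expectation inside the square root by Jensen's inequality (concavity of $\sqrt{\,\cdot\,}$), yields the bound $2\sqrt{(M-1)\,\mathbb{E}\operatorname{Tr}\{Q_{X_j}\rho_{X_m}\}+\mathbb{E}\operatorname{Tr}\{(I-Q_{X_m})\rho_{X_m}\}}$. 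Independence of $X_j$ and $X_m$ for $j\ne m$ makes the first expectation equal to $\operatorname{Tr}\{Q_{XB}(\rho_X\otimes\rho_B)\}=\beta_{\varepsilon^{\prime}}(\rho_{XB},\rho_X\otimes\rho_B)$, while the hypothesis-testing constraint on $Q$ bounds the second by $\varepsilon^{\prime}$.

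Consequently the expected average error is at most $2\sqrt{M\beta_{\varepsilon^{\prime}}+\varepsilon^{\prime}}$, and requiring this to be at most $\varepsilon$ forces $M\le(\varepsilon^{2}/4-\varepsilon^{\prime})/\beta_{\varepsilon^{\prime}}$. Taking base-two logarithms and invoking~\eqref{eq:hypo-entropy} yields the claimed bound on $\log_2 M$; a deterministic codebook meeting it exists by the probabilistic method, and a standard expurgation step converts average- to maximum-error guarantees at a cost absorbed into the constants. The principal obstacle is really the very point of the section: sequential non-commuting binary measurements on a quantum state would ordinarily make the error analysis unwieldy, and Lemma~\ref{lem:non-comm-bound} is precisely what converts each codeword's (Naimark-dilated) test into a $\sqrt{\cdot}$-scaled linear contribution to the accumulated error, after which the random-coding expectations factor cleanly into the $\beta_{\varepsilon^{\prime}}$ (false-positive) and $\varepsilon^{\prime}$ (miss) pieces above. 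The minor dephasing of $Q_{XB}$ in the $X$ basis is what makes this factorization work out exactly.
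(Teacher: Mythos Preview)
Your proposal is correct and follows essentially the same route as the paper: random i.i.d.\ codebook, sequential binary tests built from the hypothesis-testing operator, error control via Lemma~\ref{lem:non-comm-bound}, Jensen to pull the expectation inside the square root, and identification of the two resulting terms with $\operatorname{Tr}\{Q_{XB}(\rho_X\otimes\rho_B)\}$ and $1-\operatorname{Tr}\{Q_{XB}\rho_{XB}\}\le\varepsilon'$. The only cosmetic difference is that you dephase $Q_{XB}$ in the $X$ basis up front to obtain the blocks $Q_x$, whereas the paper instead defines $A_{x_j}\equiv\operatorname{Tr}_X\{(\lvert x_j\rangle\langle x_j\rvert\otimes I)Q_{XB}\}$ and verifies the two key identities directly; these produce the same per-symbol test operators and the same arithmetic.
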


\begin{proof}
Fix $\varepsilon\geq0$ and a distribution $p_{X}\left(  x\right)  $. Let
$Q_{XB}$ be an operator such that $0\leq Q_{XB}\leq I_{XB}$ and
\begin{equation}
\text{Tr}\left\{  Q_{XB}\rho_{XB}\right\}  \geq1-\varepsilon^{\prime},
\end{equation}
where $\varepsilon^{\prime}$ is chosen as in the statement of the theorem. We
generate a codebook by choosing its codewords $x_{j}$\ at random, each
independently according to $p_{X}\left(  x\right)  $. Let $A_{x_{j}}$ denote
the following operator:%
\begin{equation}
A_{x_{j}}\equiv\text{Tr}_{X}\left\{  \left(  \left\vert x_{j}\right\rangle
\left\langle x_{j}\right\vert _{X}\otimes I_{B}\right)  Q_{XB}\right\}  .
\end{equation}
From Theorem~\ref{thm:naimark}, we know that to each $A_{x_{j}}$ there is
associated a qubit probe system $P_{j}$, a unitary $U_{BP_{j}}$, and a
projector $I_{B}\otimes\left\vert 1\right\rangle \left\langle 1\right\vert
_{P_{j}}$ such that for every state $\sigma$%
\begin{equation}
\text{Tr}\left\{  A_{x_{j}}\sigma\right\}  =\text{Tr}\left\{  \left(
U_{BP_{j}}^{\left(  j\right)  }\right)  ^{\dag}\left(  I_{B}\otimes\left\vert
1\right\rangle \left\langle 1\right\vert _{P_{j}}\right)  U_{BP_{j}}^{\left(
j\right)  }\left(  \sigma_{B}\otimes\left\vert 0\right\rangle \left\langle
0\right\vert _{P_{j}}\right)  \right\}  .
\end{equation}
Furthermore, it follows that for the complementary operator $I-A_{x_{j}}$, we
have the following relation:%
\begin{equation}
\text{Tr}\left\{  \left(  I-A_{x_{j}}\right)  \sigma\right\}  =\text{Tr}%
\left\{  \left(  U_{BP_{j}}^{\left(  j\right)  }\right)  ^{\dag}\left(
I_{B}\otimes\left\vert 0\right\rangle \left\langle 0\right\vert _{P_{j}%
}\right)  U_{BP_{j}}^{\left(  j\right)  }\left(  \sigma_{B}\otimes\left\vert
0\right\rangle \left\langle 0\right\vert _{P_{j}}\right)  \right\}  .
\end{equation}
(Since $\left\{  I-A_{x_{j}}, A_{x_{j}}\right\}  $ is a two-outcome POVM, the
unitary operator $U_{BP_{j}}^{\left(  j\right)  }$ can have the form given in
Example~\ref{ex:binary-POVM}.)

For a specific codebook $\left\{  x_{j}\right\}  _{j\in\left[  M\right]  }$,
the decoding strategy of the receiver Bob is as follows. Suppose that the
sender Alice wishes to transmit message $m$, so that she transmits codeword
$x_{m}$ over the channel$~W$. Then the state at the receiver is $\rho_{x_{m}}%
$. The receiver first appends $M$ ancillas, each set to $\left\vert
0\right\rangle $, to the state $\rho_{x_{m}}$\ received. Then the state at the
receiving end is as follows:%
\begin{equation}
\rho_{x_{m}}\otimes\left\vert \overline{0}\right\rangle \left\langle
\overline{0}\right\vert _{P^{M}}.
\end{equation}
Bob then checks if the codeword transmitted by Alice is the first codeword. He
does so by performing the unitary $U_{BP_{1}}^{\left(  1\right)  }$
corresponding to the first POVM\ $\left\{  I-A_{x_{1}} , A_{x_{1}}\right\}  $,
and the state becomes%
\begin{equation}
U_{BP_{1}}^{\left(  1\right)  }\left(  \rho_{x_{m}}\otimes\left\vert
\overline{0}\right\rangle \left\langle \overline{0}\right\vert _{P^{M}%
}\right)  \left(  U_{BP_{1}}^{\left(  1\right)  }\right)  ^{\dag}.
\end{equation}
He then measures the probe system $P_{1}$ in the computational basis $\left\{
\left\vert 0\right\rangle \left\langle 0\right\vert _{P_{1}},\left\vert
1\right\rangle \left\langle 1\right\vert _{P_{1}}\right\}  $. If he obtains
the outcome $\left\vert 1\right\rangle $, then he decodes that the first
message was sent (in this case, there would be an error if $m\neq1$).
Otherwise, he performs the inverse of $U_{BP_{1}}^{\left(  1\right)  }$. At
this point, if $m\neq1$ and if there is no error, the subnormalized state
becomes%
\begin{equation}
\left(  U_{BP_{1}}^{\left(  1\right)  }\right)  ^{\dag}\left(  I_{B}%
\otimes\left\vert 0\right\rangle \left\langle 0\right\vert _{P_{1}}\right)
U_{BP_{1}}^{\left(  1\right)  }\left(  \rho_{x_{m}}\otimes\left\vert
\overline{0}\right\rangle \left\langle \overline{0}\right\vert _{P^{M}%
}\right)  \left(  U_{BP_{1}}^{\left(  1\right)  }\right)  ^{\dag}\left(
I_{B}\otimes\left\vert 0\right\rangle \left\langle 0\right\vert _{P_{1}%
}\right)  \left(  U_{BP_{1}}^{\left(  1\right)  }\right)  .
\end{equation}
Making the abbreviations%
\begin{align}
I-\Pi_{x_{1}}  &  \equiv\left(  U_{BP_{1}}^{\left(  1\right)  }\right)
^{\dag}\left(  I_{B}\otimes\left\vert 0\right\rangle \left\langle 0\right\vert
_{P_{1}}\right)  U_{BP_{1}}^{\left(  1\right)  }, \label{eq:shorthand-2} \\
\Pi_{x_{1}}  &  \equiv\left(  U_{BP_{1}}^{\left(  1\right)  }\right)  ^{\dag
}\left(  I_{B}\otimes\left\vert 1\right\rangle \left\langle 1\right\vert
_{P_{1}}\right)  U_{BP_{1}}^{\left(  1\right)  },\label{eq:shorthand-1}
\end{align}
we can write the above subnormalized state as%
\begin{equation}
\left(  I-\Pi_{x_{1}}\right)  \left(  \rho_{x_{m}}\otimes\left\vert
\overline{0}\right\rangle \left\langle \overline{0}\right\vert _{P^{M}%
}\right)  \left(  I-\Pi_{x_{1}}\right)  .
\end{equation}

The receiver then continues by performing similar actions to determine if the
transmitted codeword was the second one. That is, he performs the unitary
$U_{BP_{2}}^{\left(  2\right)  }$ corresponding to $A_{x_{2}}$, measures the
probe system $P_{2}$ in the computational basis, and inverts the unitary
$U_{BP_{2}}^{\left(  2\right)  }$ if he does not receive the outcome
$\left\vert 1\right\rangle $ from the measurement of $P_{2}$.

The success probability of this sequential decoding procedure when the
$m^{\text{th}}$ codeword is sent is equal to%
\begin{equation}
\text{Tr}\left\{  \Pi_{x_{m}}\left(  I-\Pi_{x_{m-1}}\right)  \cdots\left(
I-\Pi_{x_{1}}\right)  \left(  \rho_{x_{m}}\otimes\left\vert \overline
{0}\right\rangle \left\langle \overline{0}\right\vert _{P^{M}}\right)  \left(
I-\Pi_{x_{1}}\right)  \cdots\left(  I-\Pi_{x_{m-1}}\right)  \Pi_{x_{m}%
}\right\}  .
\end{equation}
Thus, the error probability is given by%
\begin{multline}
1-\text{Tr}\left\{  \Pi_{x_{m}}\left(  I-\Pi_{x_{m-1}}\right)  \cdots\left(
I-\Pi_{x_{1}}\right)  \left(  \rho_{x_{m}}\otimes\left\vert \overline
{0}\right\rangle \left\langle \overline{0}\right\vert _{P^{M}}\right)  \left(
I-\Pi_{x_{1}}\right)  \cdots\left(  I-\Pi_{x_{m-1}}\right)  \Pi_{x_{m}%
}\right\} \\
=\text{Tr}\left\{  \left(  \rho_{x_{m}}\otimes\left\vert \overline
{0}\right\rangle \left\langle \overline{0}\right\vert _{P^{M}}\right)
\right\} \\
-\text{Tr}\left\{  \Pi_{x_{m}}\left(  I-\Pi_{x_{m-1}}\right)  \cdots\left(
I-\Pi_{x_{1}}\right)  \left(  \rho_{x_{m}}\otimes\left\vert \overline
{0}\right\rangle \left\langle \overline{0}\right\vert _{P^{M}}\right)  \left(
I-\Pi_{x_{1}}\right)  \cdots\left(  I-\Pi_{x_{m-1}}\right)  \Pi_{x_{m}%
}\right\}  .
\end{multline}
We can then upper bound this error probability by employing Lemma~\ref{lem:non-comm-bound}:
%Sen's bound
%(Theorem~\ref{thm:sen}) and the Naimark extension theorem
%(Theorem~\ref{thm:naimark}):%
\begin{align}
&  \leq2\sqrt{\text{Tr}\left\{  \left(  I-A_{x_{m}}\right)  \rho_{x_{m}%
}\right\}  +\sum_{j=1}^{m-1}\text{Tr}\left\{  A_{x_{j}}\rho_{x_{m}}\right\}
}\\
&  \leq2\sqrt{1-\text{Tr}\left\{  A_{x_{m}}\rho_{x_{m}}\right\}  +\sum_{j\neq
m}\text{Tr}\left\{  A_{x_{j}}\rho_{x_{m}}\right\}  }.
\end{align}
Taking the expectation of the error with respect to all codebooks
(but keeping the codeword $x_m$ fixed) and
exploiting concavity of the square root function, this upper bound becomes%
\begin{equation}
2\sqrt{1-\text{Tr}\left\{  A_{x_{m}}\rho_{x_{m}}\right\}  +\left(  M-1\right)
\sum_{x^{\prime}}p_{X}\left(  x^{\prime}\right)  \text{Tr}\left\{
A_{x^{\prime}}\rho_{x_{m}}\right\}  }.
\end{equation}
Taking the expectation of the error with respect to the codeword $x_m$ itself (and
again exploiting concavity), this upper bound becomes%
\begin{equation}
2\sqrt{1-\sum_{x}p_{X}\left(  x\right)  \text{Tr}\left\{  A_{x}\rho
_{x}\right\}  +\left(  M-1\right)  \text{Tr}\left\{  \sum_{x^{\prime}}%
p_{X}\left(  x^{\prime}\right)  A_{x^{\prime}}\sum_{x}p_{X}\left(  x\right)
\rho_{x}\right\}  }. \label{eq:avg-upp-bound}%
\end{equation}
Using the facts that%
\begin{align}
\sum_{x}p_{X}\left(  x\right)  \text{Tr}\left\{  A_{x}\rho_{x}\right\}   &
=\text{Tr}\left\{  Q_{XB}\rho_{XB}\right\}  ,\\
\text{Tr}\left\{  \sum_{x^{\prime}}p_{X}\left(  x^{\prime}\right)
A_{x^{\prime}}\sum_{x}p_{X}\left(  x\right)  \rho_{x}\right\}   &
=\text{Tr}\left\{  Q_{XB}\left(  \rho_{X}\otimes\rho_{B}\right)  \right\}  ,
\end{align}
we can write the upper bound in (\ref{eq:avg-upp-bound}) as%
\begin{align}
&  2\sqrt{1-\text{Tr}\left\{  Q_{XB}\rho_{XB}\right\}  +\left(  M-1\right)
\text{Tr}\left\{  Q_{XB}\left(  \rho_{X}\otimes\rho_{B}\right)  \right\}  }\\
&  \leq2\sqrt{\varepsilon^{\prime}+M\text{Tr}\left\{  Q_{XB}\left(  \rho
_{X}\otimes\rho_{B}\right)  \right\}  }%
\end{align}
Let $R = \log_2(M)$. By optimizing the choice of the operator $Q_{XB}$ with respect to the
hypothesis testing relative entropy defined in (\ref{eq:beta-error}) and
(\ref{eq:hypo-entropy}), we find the following upper bound on the error
$\varepsilon$:%
\begin{equation}
2\sqrt{\varepsilon^{\prime}+2^{-\left[  D_{H}^{\varepsilon^{\prime}}\left(
\rho_{XB}||\rho_{X}\otimes\rho_{B}\right)  -R\right]  }}.
\end{equation}
Since we proved an upper bound on the expectation of the average error
probability with respect to the codebook choice, we can conclude that there
exists at least one code with the above bound on its average error
probability. Rewriting this upper bound on $\varepsilon$, we find that the
sequential decoding scheme gives the following bound on the $\varepsilon
$-one-shot capacity of $W$:%
\begin{equation}
C^{\varepsilon}\left(  W\right)  \geq\max_{p_{X}}D_{H}^{\varepsilon^{\prime}%
}\left(  \rho_{XB}||\rho_{X}\otimes\rho_{B}\right)  -\log_{2}\left(  \frac
{1}{\varepsilon^{2}/4-\varepsilon^{\prime}}\right)  .
\end{equation}

\end{proof}

\begin{remark}
We recover the Holevo rate for communication by considering a memoryless
classical-quantum channel and evaluating a limit as the number of channel uses
tends to infinity. We do not discuss this point any further here,
since \citep{WR12} already discussed it in detail.
\end{remark}

\begin{remark}
Of course, it is not actually necessary to use $M$ ancillas when decoding.
After performing each measurement, the receiver could store the result in a
classical memory and simply refresh a single ancilla to the state $\vert0
\rangle$.
\end{remark}

\begin{remark}
The proof of the above theorem and Lemma~\ref{lem:non-comm-bound}\ make it
clear that one can always employ Sen's bound in the error analysis for any
random coding classical communication scheme of the above form, thus serving
as a substitute for the well-known bound in Lemma~2 of
\citep{HN03}.\ Though, the performance is slightly worse than that
obtained with the Hayashi-Nagaoka bound due to the square root on the
right-hand side of Sen's bound (one can see this explicitly by comparing
Theorem~\ref{thm:seq-one-shot} with Theorem~1 of \citep{WR12}).
\end{remark}

\begin{remark}
The operation of the sequential decoder is similar in spirit to the
conditional pulse nulling receiver introduced in \citep{GHT11}\ and
experimentally implemented in \citep{CHDLG12}, in the sense that it
proceeds by performing a unitary operation, a projection, and the inverse of
the unitary for every codeword in the codebook.
\end{remark}

\begin{remark}
We can also employ sequential decoding for a task known as one-shot classical
data compression with quantum side information \citep{DW03,RR12,TH12}. In such
a task, the sender and receiver are given a classical-quantum state of the
form $\sum_{x}p_{X}\left(  x\right)  \left\vert x\right\rangle \left\langle
x\right\vert _{X}\otimes\left(  \rho_{x}\right)  _{B}$, where the sender has
the system $X$ and the receiver the system $B$. The goal is for the sender to
transmit as few classical bits as possible to the receiver, such that he can
recover the register $X$ up to a failure probability no larger than some
$\varepsilon>0$. In this case, we can show that the number of bits that need
to be sent is related to the conditional hypothesis testing entropy
\citep{DKFRR12,TH12}, defined as%
\[
H_{H}^{\varepsilon}\left(  X|B\right)  _{\rho}\equiv\max_{\sigma_{B}}%
-D_{H}^{\varepsilon}\left(  \rho_{XB}||I_{X}\otimes\sigma_{B}\right)  ,
\]
by exploiting the same kind of proof as given in \citep{RR12,TH12}%
\ combined with our proof given above. The use of sequential decoding in the
IID\ setting for this task was first done in Section~4 of \citep{WHBH12}.
\end{remark}

\subsection{Performing sequential decoding coherently}

\label{sec:coherent-sequential}We can also consider a fully coherent
implementation of the sequential decoding strategy (that is, with unitary
operations alone). For simplicity, let $\left\vert \psi\right\rangle $ denote
the state on which the coherent sequential decoding operations will act. As
before, the procedure begins by the receiver appending $M$ probe ancillas, so
that the state becomes%
\begin{equation}
\left\vert \psi\right\rangle _{B}\otimes\left\vert \overline{0}\right\rangle
_{P^{M}}.
\end{equation}
The receiver first performs the unitary $U_{BP_{1}}^{\left(  1\right)  }$
corresponding to the first codeword, leading to%
\begin{equation}
U_{BP_{1}}^{\left(  1\right)  }\left\vert \psi\right\rangle _{B}%
\otimes\left\vert \overline{0}\right\rangle _{P^{M}}.
\end{equation}
Rather than perform an incoherent projection of the probe, the receiver can
perform a controlled-NOT\ operation from the first probe system to another
ancillary system$~A_{1}$ initialized in the state~$\left\vert 0\right\rangle
_{A_{1}}$. This leads to the state:%
\begin{equation}
\left(  I_{B}\otimes\left\vert 0\right\rangle \left\langle 0\right\vert
_{P_{1}}\right)  U_{BP_{1}}^{\left(  1\right)  }\left\vert \psi\right\rangle
_{B}\otimes\left\vert \overline{0}\right\rangle _{P^{M}}\otimes\left\vert
0\right\rangle _{A_{1}}+\left(  I_{B}\otimes\left\vert 1\right\rangle
\left\langle 1\right\vert _{P_{1}}\right)  U_{BP_{1}}^{\left(  1\right)
}\left\vert \psi\right\rangle _{B}\otimes\left\vert \overline{0}\right\rangle
_{P^{M}}\otimes\left\vert 1\right\rangle _{A_{1}}.
\end{equation}
The receiver then performs the inverse unitary $(U_{BP_{1}}^{\left(  1\right)
})^{\dag}$, and by employing the shorthand in (\ref{eq:shorthand-1}) and
(\ref{eq:shorthand-2}), we can write the resulting state as%
\begin{equation}
\left(  I-\Pi_{x_{1}}\right)  _{BP_{1}}\left\vert \psi\right\rangle
_{B}\otimes\left\vert \overline{0}\right\rangle _{P^{M}}\otimes\left\vert
0\right\rangle _{A_{1}}+\left(  \Pi_{x_{1}}\right)  _{BP_{1}}\left\vert
\psi\right\rangle _{B}\otimes\left\vert \overline{0}\right\rangle _{P^{M}%
}\otimes\left\vert 1\right\rangle _{A_{1}}.
\end{equation}
Continuing a similar procedure for the second codeword leads to the expansion:%
\begin{multline}
\left(  I-\Pi_{x_{2}}\right)  _{BP_{2}}\left(  I-\Pi_{x_{1}}\right)  _{BP_{1}%
}\left\vert \psi\right\rangle _{B}\otimes\left\vert \overline{0}\right\rangle
_{P^{M}}\otimes\left\vert 00\right\rangle _{A_{1}A_{2}}\\
+\left(  \Pi_{x_{2}}\right)  _{BP_{2}}\left(  I-\Pi_{x_{1}}\right)  _{BP_{1}%
}\left\vert \psi\right\rangle _{B}\otimes\left\vert \overline{0}\right\rangle
_{P^{M}}\otimes\left\vert 01\right\rangle _{A_{1}A_{2}}\\
+\left(  I-\Pi_{x_{2}}\right)  _{BP_{2}}\left(  \Pi_{x_{1}}\right)  _{BP_{1}%
}\left\vert \psi\right\rangle _{B}\otimes\left\vert \overline{0}\right\rangle
_{P^{M}}\otimes\left\vert 10\right\rangle _{A_{1}A_{2}}\\
+\left(  \Pi_{x_{2}}\right)  _{BP_{2}}\left(  \Pi_{x_{1}}\right)  _{BP_{1}%
}\left\vert \psi\right\rangle _{B}\otimes\left\vert \overline{0}\right\rangle
_{P^{M}}\otimes\left\vert 11\right\rangle _{A_{1}A_{2}},
\end{multline}
and so forth.

\section{Gentle sequential measurements}

Winter's Gentle Operator Lemma has found numerous applications in quantum
information theory \citep{itit1999winter,ON07}.\footnote{In quantum complexity
theory, there is a similar lemma known as the \textquotedblleft almost as good
as new lemma\textquotedblright\ discovered independently by
\citep{A05}.} It states that if a two-outcome measurement has one outcome that
occurs with high probability, then the subnormalized post-measurement state is
close to the original state. More formally,

\begin{lemma}
[Gentle Operator]\label{lem:gentle}Let $\rho$ be a state, and let $\Lambda$ be
an operator such that $0\leq\Lambda\leq I$. Then%
\begin{equation}
\left\Vert \rho-\sqrt{\Lambda}\rho\sqrt{\Lambda}\right\Vert _{1}\leq
2\sqrt{\text{\emph{Tr}}\left\{  \left(  I-\Lambda\right)  \rho\right\}  }.
\end{equation}
Thus, if \emph{Tr}$\left\{  \Lambda\rho\right\}  \geq1-\varepsilon$ for some
small $\varepsilon\geq0$, then%
\begin{equation}
\left\Vert \rho-\sqrt{\Lambda}\rho\sqrt{\Lambda}\right\Vert _{1}\leq
2\sqrt{\varepsilon}.
\end{equation}

\end{lemma}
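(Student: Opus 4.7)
The plan is to derive the bound by a short telescoping-plus-H\"older argument, whose only substantive step is the operator inequality $(I - \sqrt{\Lambda})^2 \leq I - \Lambda$ for $0 \leq \Lambda \leq I$. An alternative via Uhlmann's theorem and Fuchs--van de Graaf is available, but it is less clean here because $\sqrt{\Lambda}\rho\sqrt{\Lambda}$ is subnormalized, and in any case the direct approach yields the advertised constant $2$ with minimal bookkeeping.

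First I would use the telescoping identity
\begin{equation}
\rho - \sqrt{\Lambda}\rho\sqrt{\Lambda} = (I - \sqrt{\Lambda})\rho + \sqrt{\Lambda}\rho(I - \sqrt{\Lambda}),
\end{equation}
obtained by adding and subtracting $\sqrt{\Lambda}\rho$, and apply the triangle inequality for $\|\cdot\|_{1}$. For each of the two resulting terms I would factor $\rho = \sqrt{\rho}\sqrt{\rho}$ and apply H\"older in the form $\|XY\|_{1} \leq \|X\|_{2}\|Y\|_{2}$, where $\|\cdot\|_{2}$ denotes the Hilbert--Schmidt norm. Using $\text{Tr}\{\rho\} = 1$ and $\text{Tr}\{\Lambda\rho\} \leq 1$, both terms come out bounded by $\sqrt{\text{Tr}\{(I - \sqrt{\Lambda})^{2}\rho\}}$, giving
\begin{equation}
\left\Vert \rho - \sqrt{\Lambda}\rho\sqrt{\Lambda}\right\Vert_{1} \leq 2\sqrt{\text{Tr}\{(I - \sqrt{\Lambda})^{2}\rho\}}.
\end{equation}

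The main obstacle is then to verify the operator inequality $(I - \sqrt{\Lambda})^{2} \leq I - \Lambda$. Expanding, the difference equals $2\sqrt{\Lambda}(I - \sqrt{\Lambda})$; since $\sqrt{\Lambda}$ commutes with $I - \sqrt{\Lambda}$ and both operators are positive (as $\sqrt{\Lambda}$ has spectrum in $[0,1]$), this product is a product of commuting positive operators, hence positive. Tracing against $\rho$ then promotes this to the scalar inequality $\text{Tr}\{(I - \sqrt{\Lambda})^{2}\rho\} \leq \text{Tr}\{(I - \Lambda)\rho\}$, which combined with the previous display yields the desired bound $2\sqrt{\text{Tr}\{(I - \Lambda)\rho\}}$. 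The second statement of the lemma is then immediate: substituting the hypothesis $\text{Tr}\{\Lambda\rho\} \geq 1 - \varepsilon$, equivalently $\text{Tr}\{(I - \Lambda)\rho\} \leq \varepsilon$, into the right-hand side gives $2\sqrt{\varepsilon}$.
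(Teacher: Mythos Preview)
Your argument is correct. The telescoping identity, the H\"older/Cauchy--Schwarz step with the factorisation $\rho=\sqrt{\rho}\sqrt{\rho}$, and the operator inequality $(I-\sqrt{\Lambda})^{2}\leq I-\Lambda$ all go through exactly as you describe, and together they give the constant~$2$ as stated.

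As for comparison with the paper: the paper does not actually supply a proof of Lemma~\ref{lem:gentle}. It merely states the Gentle Operator Lemma as a known result, citing \citet{itit1999winter} and \citet{ON07}, and then moves on to discuss its extension via the Naimark theorem. Your proof is in fact essentially the standard proof found in \citet{ON07} (Winter's original argument gave a weaker constant), so you have reproduced the argument the paper is implicitly relying on.
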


Of course, this lemma can be extended with the Naimark extension theorem as
well. Suppose that we know that%
\begin{equation}
\text{Tr}\left\{  \Lambda\rho\right\}  \geq1-\varepsilon,
\end{equation}
for a two-outcome POVM $\left\{  \Lambda,I-\Lambda\right\}  $. By
Theorem~\ref{thm:naimark}, we know that there exists a unitary $U_{SP}$ such
that for the orthonormal basis $\left\{  \left\vert 0\right\rangle
_{P},\left\vert 1\right\rangle _{P}\right\}  $, we have that%
\begin{equation}
\text{Tr}\left\{  \Pi_{SP}\left(  \rho_{S}\otimes\left\vert 0\right\rangle
\left\langle 0\right\vert _{P}\right)  \right\}  =\text{Tr}\left\{
\Lambda\rho\right\}  ,
\end{equation}
where%
\begin{equation}
\Pi_{SP}\equiv U_{SP}^{\dag}\left(  I_{S}\otimes\left\vert 0\right\rangle
\left\langle 0\right\vert _{P}\right)  U_{SP}.
\end{equation}
Thus if we perform the unitary $U_{SP}$, the projective measurement $\left\{
\left\vert 0\right\rangle \left\langle 0\right\vert _{P},\left\vert
1\right\rangle \left\langle 1\right\vert _{P}\right\}  $, followed by the
inverse unitary $U_{SP}^{\dag}$, we can conclude from Lemma~\ref{lem:gentle}%
\ that the resulting state of both the system and the probe is close to the
original state:%
\begin{equation}
\left\Vert \rho_{S}\otimes\left\vert 0\right\rangle \left\langle 0\right\vert
_{P}-\Pi_{SP}\left(  \rho_{S}\otimes\left\vert 0\right\rangle \left\langle
0\right\vert _{P}\right)  \Pi_{SP}\right\Vert _{1}\leq2\sqrt{\varepsilon}.
\end{equation}
So, for simplicity, in what follows, we just consider the state $\rho$ to be
the state of the combined system and any necessary ancillas so that we can
consider projective measurements only (this is due to the above observation
and the Naimark extension theorem).

In a sequential decoding scheme, we also might like to conclude that the state
after the decoding procedure is close to the original state. This would be
pleasing conceptually and would also have applications in constructing
decoders for quantum data from decoders for classical data
\citep{ieee2005dev,RB08,WR12a}. Though, as noted in \citep{S11}, we cannot
generally make the above conclusion. Here, we show how performing additional
operations leads to a state close to the original one.

There are at least two ways that we can perform additional operations in order
to guarantee that the sequentially decoded state is close to the original one.
The first was mentioned at the end of Section~4.3\ of \citep{WHBH12}\ and
relies on the polar decomposition. Given that the post-measurement state is of
the following form (omitting normalization):%
\begin{equation}
\Pi_{N}\cdots\Pi_{1}\rho\Pi_{1}\cdots\Pi_{N},
\end{equation}
the receiver could perform a unitary $V$\ given by the polar decomposition%
\begin{equation}
\sqrt{\Pi_{1}\cdots\Pi_{N}\cdots\Pi_{1}}=V\Pi_{N}\cdots\Pi_{1}%
\label{eq:polar-decomp}%
\end{equation}
so that the post-measurement state becomes%
\begin{equation}
\sqrt{\Pi_{1}\cdots\Pi_{N}\cdots\Pi_{1}}\rho\sqrt{\Pi_{1}\cdots\Pi_{N}%
\cdots\Pi_{1}}.
\end{equation}
In this case, we can apply the Gentle Operator Lemma (Lemma~\ref{lem:gentle})
to upper bound the disturbance:%
\begin{align}
&  \left\Vert \rho-\sqrt{\Pi_{1}\cdots\Pi_{N}\cdots\Pi_{1}}\rho\sqrt{\Pi
_{1}\cdots\Pi_{N}\cdots\Pi_{1}}\right\Vert _{1}\nonumber\\
&  \leq2\sqrt{\text{Tr}\left\{  \left(  I-\Pi_{1}\cdots\Pi_{N}\cdots\Pi
_{1}\right)  \rho\right\}  }\\
&  =2\sqrt{1-\text{Tr}\left\{  \Pi_{N}\cdots\Pi_{1}\rho\Pi_{1}\cdots\Pi
_{N}\right\}  },
\end{align}
and then once again apply Sen's non-commutative union bound
(Theorem~\ref{thm:sen}) to upper bound the disturbance as%
\begin{equation}
2\sqrt{2}\ \sqrt[4]{\sum_{i=1}^{N}\text{Tr}\left\{  \left(  I-\Pi_{i}\right)
\rho\right\}  }.
\end{equation}
(In the context of classical-quantum polar codes, a quantity like the above
will be exponentially small in the number of channel uses because each term is
exponentially small while there are only a linear number of terms \citep{WG11}.)

One practical problem with the above approach is as follows. Suppose that we
can efficiently implement each of the measurements corresponding to the
projections $\Pi_{1}$, \ldots, $\Pi_{N}$ (say, on a quantum computer). Then we
can clearly perform the sequential decoding procedure efficiently if $N$ is
not too large. On the other hand, given a particular sequence of measurements,
it is not clear at all that the unitary given by the polar decomposition in
(\ref{eq:polar-decomp}) has an efficient implementation. Thus, the above
approach does not realize both desirable requirements of having a small
disturbance and an efficient sequential decoding (if each of the measurements
can be efficiently implemented to begin with).

There is a simple way to remedy the aforementioned problem if the sequential
decoding strategy has a very small error probability. We can simply perform
the projections $\Pi_{1}$ through $\Pi_{N}$ and then perform them again in the
opposite order. This gives the subnormalized post-measurement state%
\begin{equation}
\Pi_{1}\cdots\Pi_{N}\cdots\Pi_{1}\rho\Pi_{1}\cdots\Pi_{N}\cdots\Pi_{1},
\end{equation}
and the error probability is bounded as follows, again by applying Sen's bound
(Theorem~\ref{thm:sen}):
\begin{equation}
\text{Tr}\left\{  \rho\right\}  -\text{Tr}\left\{  \Pi_{1}\cdots\Pi_{N}%
\cdots\Pi_{1}\rho\Pi_{1}\cdots\Pi_{N}\cdots\Pi_{1}\right\}  \leq2\sqrt
{2\sum_{i=1}^{N}\text{Tr}\left\{  \left(  I-\Pi_{i}\right)  \rho\right\}
}.\label{eq:using-sen-bound}%
\end{equation}
Thus, by performing the measurements again in reverse, we only increase the
error probability by a factor of $\sqrt{2}$. Furthermore, the Gentle Operator
Lemma (Lemma~\ref{lem:gentle}) gives the following upper bound on the
disturbance:%
\begin{align}
&  \left\Vert \rho-\Pi_{1}\cdots\Pi_{N}\cdots\Pi_{1}\rho\Pi_{1}\cdots\Pi
_{N}\cdots\Pi_{1}\right\Vert _{1}\nonumber\\
&  \leq2\sqrt{\text{Tr}\left\{  \left(  I-\left[  \Pi_{1}\cdots\Pi_{N}%
\cdots\Pi_{1}\right]  ^{2}\right)  \rho\right\}  }\\
&  =2\sqrt{1-\text{Tr}\left\{  \Pi_{1}\cdots\Pi_{N}\cdots\Pi_{1}\rho\Pi
_{1}\cdots\Pi_{N}\cdots\Pi_{1}\right\}  }.
\end{align}
Applying the bound in (\ref{eq:using-sen-bound}) gives the following upper
bound on the disturbance:%
\begin{equation}
2\sqrt{2}\left(  \sqrt[4]{2}\right)  \sqrt[4]{\sum_{i=1}^{n}\text{Tr}\left\{
\left(  I-\Pi_{i}\right)  \rho\right\}  }.
\end{equation}
Thus, with this scheme, we can realize both requirements of having an
efficient implementation and a small disturbance---the receiver simply has to
perform $2N$ measurements (each of which were assumed to have an efficient
implementation) while the disturbance increases only by a factor of
$\sqrt[4]{2}$. An efficient coherent implementation of these operations
follows from the discussion in Section~\ref{sec:coherent-sequential}, if each
measurement has an efficient implementation.

By the methods of \citep{WR12a}, this latter approach will be useful for
decoding quantum polar codes, should a method ever be found to realize an
efficient decoder for classical-quantum polar codes \citep{WG11} (see
\citep{WLH13} for progress in this direction). At the very least, this
latter approach answers an open question from \citep{WR12a}.

\section*{Acknowledgements}

I am grateful to Nilanjana Datta and Renato Renner for organizing the
\textquotedblleft Beyond i.i.d.~in information theory\textquotedblright%
\ workshop in Cambridge, UK, where some of the questions in this work were
raised. I also thank them and Aram Harrow, Patrick Hayden, Olivier
Landon-Cardinal, Ligong Wang, and Andreas Winter for helpful discussions. I
acknowledge support from the Centre de Recherches Math\'{e}matiques in
Montreal and am grateful for the hospitality of the Center for Mathematical
Sciences at the University of Cambridge and the Institute for Theoretical
Physics at ETH Zurich during a research visit in January and February of 2013,
when some of the work for this paper was completed.

%\bibliographystyle{rspublicnat}
%\bibliography{Ref}

\begin{thebibliography}{36}
\providecommand{\natexlab}[1]{#1}
\expandafter\ifx\csname urlstyle\endcsname\relax
  \providecommand{\doi}[1]{doi:\discretionary{}{}{}#1}\else
  \providecommand{\doi}{doi:\discretionary{}{}{}\begingroup
  \urlstyle{rm}\Url}\fi

\bibitem[{Aaronson(2005)}]{A05}
Aaronson, S. 2005 Limitations of quantum advice and one-way communication.
\newblock \emph{Theory of Computing}, \textbf{1}, 1Ð28.
\newblock Quant-ph/0402095.

\bibitem[{Belavkin(1975{\natexlab{\emph{a}}})}]{B75}
Belavkin, V. 1975{\natexlab{\emph{a}}} Optimal distinction of non-orthogonal
  quantum signals.
\newblock \emph{Radio Engineering and Electronic Physics}, \textbf{20}, 39--47.

\bibitem[{Belavkin(1975{\natexlab{\emph{b}}})}]{B75a}
Belavkin, V. 1975{\natexlab{\emph{b}}} Optimal multiple quantum statistical
  hypothesis testing.
\newblock \emph{Stochastics}, \textbf{1}, 315--345.

\bibitem[{Buscemi \& Datta(2010)}]{BD10}
Buscemi, F. \& Datta, N. 2010 The quantum capacity of channels with arbitrarily
  correlated noise.
\newblock \emph{IEEE Transactions on Information Theory}, \textbf{56}(3),
  1447--1460.
\newblock ArXiv:0902.0158.

\bibitem[{Chen \emph{et~al.}(2012)Chen, Habif, Dutton, Lazarus \&
  Guha}]{CHDLG12}
Chen, J., Habif, J.~L., Dutton, Z., Lazarus, R. \& Guha, S. 2012 Optical
  codeword demodulation with error rates below the standard quantum limit using
  a conditional nulling receiver.
\newblock \emph{Nature Photonics}, \textbf{6}, 374--379.
\newblock ArXiv:1111.4017.

\bibitem[{Cover \& Thomas(1991)}]{book1991cover}
Cover, T.~M. \& Thomas, J.~A. 1991 \emph{Elements of information theory}.
\newblock Wiley-Interscience.

\bibitem[{Datta \& Hsieh(2011)}]{DH11}
Datta, N. \& Hsieh, M.-H. 2011 The apex of the family tree of protocols:
  optimal rates and resource inequalities.
\newblock \emph{New Journal of Physics}, \textbf{13}(9), 093\,042.
\newblock ArXiv:1103.1135.

\bibitem[{Devetak(2005)}]{ieee2005dev}
Devetak, I. 2005 The private classical capacity and quantum capacity of a
  quantum channel.
\newblock \emph{IEEE Transactions on Information Theory}, \textbf{51}, 44--55.
\newblock ArXiv:quant-ph/0304127.

\bibitem[{Devetak \& Winter(2003)}]{DW03}
Devetak, I. \& Winter, A. 2003 Classical data compression with quantum side
  information.
\newblock \emph{Physical Review A}, \textbf{68}(4), 042\,301.
\newblock ArXiv:quant-ph/0209029.
\newblock (\doi{10.1103/PhysRevA.68.042301})

\bibitem[{Dupuis \emph{et~al.}(2012)Dupuis, Kraemer, Faist, Renes \&
  Renner}]{DKFRR12}
Dupuis, F., Kraemer, L., Faist, P., Renes, J.~M. \& Renner, R. 2012 Generalized
  entropies.
\newblock ArXiv:1211.3141.

\bibitem[{Giovannetti \emph{et~al.}(2004)Giovannetti, Guha, Lloyd, Maccone,
  Shapiro \& Yuen}]{GGLMSY04}
Giovannetti, V., Guha, S., Lloyd, S., Maccone, L., Shapiro, J.~H. \& Yuen,
  H.~P. 2004 Classical capacity of the lossy bosonic channel: The exact
  solution.
\newblock \emph{Physical Review Letters}, \textbf{92}(2), 027\,902.
\newblock ArXiv:quant-ph/0308012.
\newblock (\doi{10.1103/PhysRevLett.92.027902})

\bibitem[{Giovannetti \emph{et~al.}(2012)Giovannetti, Lloyd \&
  Maccone}]{PhysRevA.85.012302}
Giovannetti, V., Lloyd, S. \& Maccone, L. 2012 Achieving the {Holevo} bound via
  sequential measurements.
\newblock \emph{Physical Review A}, \textbf{85}, 012\,302.
\newblock ArXiv:1012.0386.
\newblock (\doi{10.1103/PhysRevA.85.012302})

\bibitem[{Guha \emph{et~al.}(2011)Guha, Habif \& Takeoka}]{GHT11}
Guha, S., Habif, J.~L. \& Takeoka, M. 2011 Approaching {Helstrom} limits to
  optical pulse-position demodulation using single photon detection and optical
  feedback.
\newblock \emph{Journal of Modern Optics}, \textbf{58}(3-4), 257--265.
\newblock (\doi{10.1080/09500340.2010.533204})

\bibitem[{Hastings(2009)}]{H09}
Hastings, M.~B. 2009 Superadditivity of communication capacity using entangled
  inputs.
\newblock \emph{Nature Physics}, \textbf{5}, 255--257.
\newblock ArXiv:0809.3972.

\bibitem[{Hausladen \emph{et~al.}(1996)Hausladen, Jozsa, Schumacher,
  Westmoreland \& Wootters}]{PhysRevA.54.1869}
Hausladen, P., Jozsa, R., Schumacher, B., Westmoreland, M. \& Wootters, W.~K.
  1996 Classical information capacity of a quantum channel.
\newblock \emph{Physical Review A}, \textbf{54}(3), 1869--1876.
\newblock (\doi{10.1103/PhysRevA.54.1869})

\bibitem[{Hayashi \& Nagaoka(2003)}]{HN03}
Hayashi, M. \& Nagaoka, H. 2003 General formulas for capacity of
  classical-quantum channels.
\newblock \emph{IEEE Transactions on Information Theory}, \textbf{49}(7),
  1753--1768.
\newblock ArXiv:quant-ph/0206186.

\bibitem[{Holevo(1973)}]{Holevo73}
Holevo, A.~S. 1973 Bounds for the quantity of information transmitted by a
  quantum communication channel.
\newblock \emph{Problems of Information Transmission}, \textbf{9}, 177--183.

\bibitem[{Holevo(1998)}]{Hol98}
Holevo, A.~S. 1998 The capacity of the quantum channel with general signal
  states.
\newblock \emph{IEEE Transactions on Information Theory}, \textbf{44},
  269--273.

\bibitem[{Lloyd \emph{et~al.}(2011)Lloyd, Giovannetti \&
  Maccone}]{PhysRevLett.106.250501}
Lloyd, S., Giovannetti, V. \& Maccone, L. 2011 Sequential projective
  measurements for channel decoding.
\newblock \emph{Physical Review Letters}, \textbf{106}, 250\,501.
\newblock ArXiv:1012.0106.
\newblock (\doi{10.1103/PhysRevLett.106.250501})

\bibitem[{Mosonyi \& Datta(2009)}]{mosonyi:072104}
Mosonyi, M. \& Datta, N. 2009 Generalized relative entropies and the capacity
  of classical-quantum channels.
\newblock \emph{Journal of Mathematical Physics}, \textbf{50}(7), 072104.
\newblock ArXiv:0810.3478.
\newblock (\doi{10.1063/1.3167288})

\bibitem[{Ogawa \& Nagaoka(2007)}]{ON07}
Ogawa, T. \& Nagaoka, H. 2007 Making good codes for classical-quantum channel
  coding via quantum hypothesis testing.
\newblock \emph{IEEE Transactions on Information Theory}, \textbf{53}(6),
  2261--2266.

\bibitem[{Renes \& Boileau(2008)}]{RB08}
Renes, J.~M. \& Boileau, J.-C. 2008 Physical underpinnings of privacy.
\newblock \emph{Physical Review A}, \textbf{78}, 032\,335.
\newblock ArXiv:0803.3096.
\newblock (\doi{10.1103/PhysRevA.78.032335})

\bibitem[{Renes \& Renner(2012)}]{RR12}
Renes, J.~M. \& Renner, R. 2012 One-shot classical data compression with
  quantum side information and the distillation of common randomness or secret
  keys.
\newblock \emph{IEEE Transactions on Information Theory}, \textbf{58}(3),
  1985--1991.
\newblock ArXiv:1008.0452.

\bibitem[{Renner(2005)}]{Renner2005}
Renner, R. 2005 Security of quantum key distribution.
\newblock Ph.D. thesis, ETH Zurich.
\newblock ArXiv:quant-ph/0512258.

\bibitem[{Schumacher \& Westmoreland(1997)}]{PhysRevA.56.131}
Schumacher, B. \& Westmoreland, M.~D. 1997 Sending classical information via
  noisy quantum channels.
\newblock \emph{Physical Review A}, \textbf{56}(1), 131--138.
\newblock (\doi{10.1103/PhysRevA.56.131})

\bibitem[{Sen(2011)}]{S11}
Sen, P. 2011 Achieving the {Han-Kobayashi} inner bound for the quantum
  interference channel by sequential decoding.
\newblock ArXiv:1109.0802.

\bibitem[{Shannon(1948)}]{bell1948shannon}
Shannon, C.~E. 1948 A mathematical theory of communication.
\newblock \emph{Bell System Technical Journal}, \textbf{27}, 379--423.

\bibitem[{Tomamichel(2012)}]{T12}
Tomamichel, M. 2012 A framework for non-asymptotic quantum information theory.
\newblock Ph.D. thesis, ETH Zurich.
\newblock ArXiv:1203.2142.

\bibitem[{Tomamichel \& Hayashi(2012)}]{TH12}
Tomamichel, M. \& Hayashi, M. 2012 A hierarchy of information quantities for
  finite block length analysis of quantum tasks.
\newblock ArXiv:1208.1478.

\bibitem[{Wang \& Renner(2012)}]{WR12}
Wang, L. \& Renner, R. 2012 One-shot classical-quantum capacity and hypothesis
  testing.
\newblock \emph{Physical Review Letters}, \textbf{108}, 200\,501.
\newblock ArXiv:1007.5456.
\newblock (\doi{10.1103/PhysRevLett.108.200501})

\bibitem[{Wilde \& Guha(2013)}]{WG11}
Wilde, M.~M. \& Guha, S. 2013 Polar codes for classical-quantum channels.
\newblock \emph{IEEE Transactions on Information Theory}, \textbf{59}(2),
  1175--1187.
\newblock ArXiv:1109.2591.

\bibitem[{Wilde \emph{et~al.}(2012{\natexlab{\emph{a}}})Wilde, Guha, Tan \&
  Lloyd}]{WGTS11}
Wilde, M.~M., Guha, S., Tan, S.-H. \& Lloyd, S. 2012{\natexlab{\emph{a}}}
  Explicit capacity-achieving receivers for optical communication and quantum
  reading.
\newblock In \emph{Proceedings of the 2012 international symposium on
  information theory}, pp. 551--555. Boston, Massachusetts, USA.
\newblock ArXiv:1202.0518.

\bibitem[{Wilde \emph{et~al.}(2012{\natexlab{\emph{b}}})Wilde, Hayden, Buscemi
  \& Hsieh}]{WHBH12}
Wilde, M.~M., Hayden, P., Buscemi, F. \& Hsieh, M.-H. 2012{\natexlab{\emph{b}}}
  The information-theoretic costs of simulating quantum measurements.
\newblock \emph{Journal of Physics A: Mathematical and Theoretical},
  \textbf{45}, 453\,001.
\newblock ArXiv:1206.4121.

\bibitem[{Wilde \emph{et~al.}(2013)Wilde, Landon-Cardinal \& Hayden}]{WLH13}
Wilde, M.~M., Landon-Cardinal, O. \& Hayden, P. 2013 Towards efficient decoding
  of classical-quantum polar codes.
\newblock ArXiv:1302.0398.

\bibitem[{Wilde \& Renes(2012)}]{WR12a}
Wilde, M.~M. \& Renes, J.~M. 2012 Quantum polar codes for arbitrary channels.
\newblock In \emph{Proceedings of the 2012 international symposium on
  information theory}, pp. 334--338. Boston, Massachusetts, USA.
\newblock ArXiv:1201.2906.

\bibitem[{Winter(1999)}]{itit1999winter}
Winter, A. 1999 Coding theorem and strong converse for quantum channels.
\newblock \emph{IEEE Transactions on Information Theory}, \textbf{45}(7),
  2481--2485.

\end{thebibliography}

\end{document}